\newtheorem{theorem}{Theorem}
\newtheorem{lem}{Lemma}
\newtheorem{col}{Corollary}
\newcommand{\ignore}[1]{}
\begin{document}
%
\title{Cooperative Handover Management in Dense Cellular Networks}

\author[Rabe Arshad]
       {Rabe Arshad$^{\star}$, Hesham ElSawy$^*$, Sameh Sorour$^{\dagger}$,
       Tareq Y. Al-Naffouri$^*$, and Mohamed-Slim Alouini$^*$
       \\ \\
       $^{\star}$Electrical Engineering Department, King Fahd University of Petroleum and Minerals (KFUPM), Saudi Arabia\\
       Email: g201408420@kfupm.edu.sa\\
       $^*$King Abdullah University of Science and Technology (KAUST), Saudi Arabia\\
       Emails: $\{$hesham.elsawy, tareq.alnaffouri, slim.alouini$\}$@kaust.edu.sa\\
$^{\dagger}$Department of Electrical and Computer Engineering, University of Idaho, USA\\
       Email: samehsorour@gmail.com}

\maketitle

\vspace{-1cm}
\begin{abstract}
Network densification has always been an important factor to cope with the ever increasing capacity demand. Deploying more base stations (BSs) improves the spatial frequency utilization, which increases the network capacity. However, such improvement comes at the expense of shrinking the BSs' footprints, which increases the handover (HO) rate and may diminish the foreseen capacity gains. In this paper, we propose a cooperative HO management scheme to mitigate the HO effect on throughput gains achieved via cellular network densification. The proposed HO scheme relies on skipping HO to the nearest BS at some instances along the user's trajectory while enabling cooperative BS service during HO execution at other instances. To this end, we develop a mathematical model, via stochastic geometry, to quantify the performance of the proposed HO scheme in terms of coverage probability and user throughput. The results show that the proposed cooperative HO scheme outperforms the always best connected based association at high mobility. Also, the value of BS cooperation along with handover skipping is quantified with respect to the HO skipping only that has recently appeared in the literature. Particularly, the proposed cooperative HO scheme shows throughput gains of 12\% to 27\% and 17\% on average, when compared to the always best connected and HO skipping only schemes at user velocity ranging from 80 km/h to 160 Km/h, respectively.
\end{abstract}
\begin{keywords}
Dense Cellular Networks; Handover Management; Stochastic Geometry; CoMP.
\end{keywords}

\IEEEpeerreviewmaketitle

\section{Introduction}

\lettrine{N}{       etwork} densification is a potential solution to cater the increasing traffic demand and is expected to have a major contribution in fulfilling the ambitious 1000-fold capacity improvements required for next generation 5G cellular networks~\cite{1}. Network densification improves the spatial frequency reuse by shrinking the BSs' footprints to increase the delivered data rate per unit area. Hence, each BS serves lesser number of users with higher throughput for each user. However, such improvement comes at the expense of increased handover (HO) rates for mobile users. Mobile users change their BS associations more frequently in denser network environment due to the reduced BSs' footprints, to maintain the best connectivity. Note that the best network connectivity may differ according to the network objective~\cite{4a,5a}. However, in all cases, densifying the network by deploying more BSs, decreases the service region of each BS and increases the handover rate.

The HO procedure involves signaling between mobile user, serving BS, target BS and the core network, which consumes physical resources and incurs delay. Therefore, the number of HOs per user per unit time is always a performance limiting parameter for cellular operators. The effect of HO is more acute in highly dense cellular networks due to the excessive handover rate, which may negate the expected gains from network densification. In extreme cases, where high mobility exists in urban regions, such as users riding monorails in downtowns, the cellular networks may fail to support users due to the small cell dwell times. Several studies were conducted about HO management in dense cellular networks including \cite{HO_1}, \cite{HO_2}. A recent study~\cite{icc} proposes a new HO scheme, denoted as HO skipping, to reduce HO signaling and enhance the average rate for mobile users. The main idea in the HO skipping scheme is to sacrifice the best connectivity at some points in the user's trajectory to reduce the number of HOs per unit time. That is, a mobile user may skip a handover to the nearest BS and connect to a farther BS along its trajectory in order to maintain a longer connection and experience a better long term average throughput. Despite the lower service rates, the users obtain at times, when they are not connected to their nearest BSs (which we refer to as the blackout phases), the overall throughput gains of HO skipping were indeed observed in [6] due to HO delay reduction. Yet, the matter of improving the service rates (and thus the overall throughput) of users during the blackout phases is still to be addressed to guarantee a ubiquitous acceptable quality of service.

In this paper, we propose a cooperative handover management scheme that exploits both HO skipping to reduce the HO rate and BS cooperation to enhance the performance during blackout phases. The BSs cooperate by forming a network MIMO system via non-coherent coordinated multipoint (CoMP) transmission strategy \cite{3gpp, crancomp1, crancomp2}. When the user decides to skip the handover to the nearest BS, the serving BS and the target BS simultaneously transmit the user data during its transition through the skipped BS coverage area as shown in Fig.~\ref{1d}. Particularly, in blackout phase, the second and third nearest BSs cooperate to serve the user in order to improve the signal-to-interference-plus-noise-ratio (SINR). Consequently, the cooperative handover management scheme simultaneously reduces HO delay and maintains high SINR for the user along its trajectory. It is worth mentioning that the non-coherent transmission is considered in this paper as it may be hard to estimate the channel state information (CSI) in the considered high mobility scenario.

In order to draw rigorous conclusions on the proposed handover management scheme, we develop a mathematical model, based on stochastic geometry, which incorporates the HO effect on the coverage probability and throughput. Stochastic geometry is a powerful mathematical tool for modeling, designing and analyzing the performance of wireless networks encompassing random topologies (see \cite{6a} for a survey). We focus on the performance of a cooperative handover management scheme in a single tier downlink cellular network with BSs modeled via a Poisson point process (PPP). The PPP assumption is widely accepted for modeling cellular networks and has been verified in \cite{7a,8a,9a} by several experimental studies. To this end, we derive expressions for the coverage probability and the average throughput in the proposed HO skipping scheme and compare it to the best connected scheme as well as the non-cooperative HO skipping scheme proposed in \cite{icc}. The results show considerable improvement in coverage probability for the HO skipping case when BS cooperation is enabled. Also, the gains in average throughput are achieved at lower user velocity when compared to the HO skipping without CoMP transmission. Finally, we quantify the performance loss due to non-coherent COMP transmission when compared to the coherent CoMP transmission (i.e. with precoding), which is shown to be less than $6\%$ at low SINR thresholds.

\begin{figure}
\centering
\includegraphics[width=0.9\linewidth]{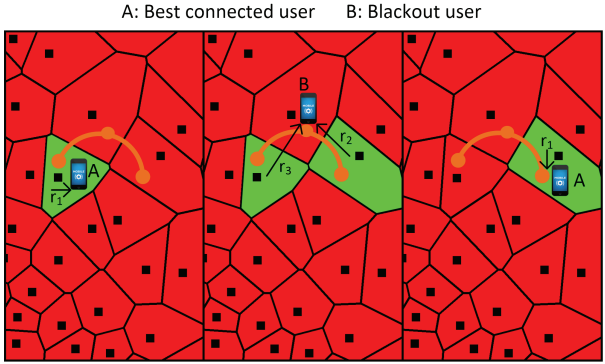}
\small \caption{A and B represent best connected and blackout with BS cooperation users, respectively. $r_i$ represents the distance between the user and $i^{th}$ nearest BS. Red and green colors show interference and serving regions with BSs located at the corresponding voronoi centers, respectively. Brown solid line represents the user trajectory.}
\label{1d}
\end{figure}
\section{System Model}
In this paper, we consider a single tier downlink cellular network with CoMP transmission. We abstract BSs' locations according to a homogeneous PPP $\Phi$ with BS intensity $\lambda$. All BSs have the same transmit power denoted by $P$. A general path loss model with path loss exponent $\eta>2$ is considered. In addition to the path loss, the channel introduces multi-path fading in the transmitted signal. Channel gains are assumed to have Rayleigh distribution with unit power i.e. $h\sim exp(1)$.
For the SINR analysis to be tractable, we infer the average SINR along the users trajectory via the spatially averaged stationary SINR in the network. Without loss of generality, the stationary SINR is obtained by studying the SINR of a test user located at the origin~\cite{book}. For the handover analysis, we assume that the test user is moving on an arbitrary long trajectory with velocity $v$. We assume that a HO is performed when a user enters the voronoi boundary of a particular BS. Also, we ignore all types of HO failures and assume that all HOs are successful. During each HO, a delay $d$ is incurred to execute HO signaling between the user, the serving BS, the target BS and the core network. We consider different backhauling schemes that impose different HO and core network signaling delays.

Fig. 1 shows the conventional and proposed handover schemes. In the conventional case, which we refer to as the best connected scenario, the user is always connected to the BS that provides the strongest received signal power. Hence, a handover is executed at every cell boundary crossing and the user is always associated to its nearest BS. In the proposed scheme, the user skips associating to the nearest BS at the first cell boundary crossing. The second and third nearest BSs thus cooperate to serve this user via non-coherent CoMP transmission. At the second cell boundary crossing, the user keeps its association to the nearest BS only. This pattern is repeated along the user's trajectory. Hence, on average, the user spends 50\% of the time in the best connected case and 50\% of the time in blackout case with BS cooperation. Let $\{r_i; i=1,2,3\}$ be the set of ascending ordered distances between the user and all BSs. The user is thus served by the BS located at the distance $r_1$ $50 \%$ of the time and is simultaneously served by the BSs located at $r_{2}$ and $ r_{3}$ for rest of the time. We present our model assuming that the BSs do not have CSI as it is difficult to estimate the channel for mobility scenarios. For the sake of completeness, the non-coherent CoMP transmission is benchmarked against the coherent CoMP transmission where perfect CSI is assumed to be available.
\section{Distance Distribution}
In stochastic geometry analysis, the first step is to characterize the service distance distribution, which is further used to characterize the coverage probability. In the proposed scheme, the user is either served from the nearest BS, or the second and third nearest BSs. Hence, we first derive the joint distribution of the distances between the user to the nearest, second nearest, and third nearest BS(s), which is illustrated in the following lemma:
\begin{lem}
\label{lem:joint}
In a single tier cellular network with intensity $\lambda$, the joint distance distribution between a user and its skipped and serving BSs with cooperation is given by
\small
\begin{align}
\hspace{-0.3cm}f^{(bk)}_{r_1,r_2,r_3}(x,y,z)\hspace{-0.05cm}=\hspace{-0.05cm}{(2\pi \lambda)^{3}  x y z e^{-\pi \lambda z^2}}; 0 \leq x \leq y \leq z \leq \infty
\label{eq:joint}
\end{align}
\normalsize
\end{lem}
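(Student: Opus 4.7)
The plan is to exploit the well-known fact that pushing the planar PPP $\Phi$ of intensity $\lambda$ through the area map $r \mapsto \pi r^2$ transforms the squared distances from the test user at the origin into the arrival epochs of a one-dimensional homogeneous Poisson process on $[0,\infty)$ of rate $\lambda$. Writing $T_i := \pi r_i^2$, the triple $(T_1,T_2,T_3)$ becomes just the first three arrival times of that 1D process, whose joint density is the classical expression $\lambda^3 e^{-\lambda t_3}$ on the simplex $0 \le t_1 \le t_2 \le t_3$.

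I would then finish by changing variables back to $(r_1,r_2,r_3)=(x,y,z)$. The Jacobian of $t_i = \pi r_i^2$ is $\prod_{i=1}^{3} 2\pi r_i = 8\pi^3 xyz$, so multiplying the density by this factor and substituting $t_3 = \pi z^2$ yields $(2\pi\lambda)^3 xyz\, e^{-\pi\lambda z^2}$ on $0 \le x \le y \le z$, which matches the claim. A fully self-contained alternative avoids the one-dimensional reduction and instead reads the density off the PPP infinitesimally: the probability of finding exactly one BS in each of the disjoint annuli $[x,x+dx]$, $[y,y+dy]$, $[z,z+dz]$ and no other BS in $B(0,z)$ equals, to leading order, $(\lambda 2\pi x\, dx)(\lambda 2\pi y\, dy)(\lambda 2\pi z\, dz)\, e^{-\pi\lambda z^2}$, where the exponential is the Poisson void probability of the disk of radius $z$; identifying the coefficient of $dx\, dy\, dz$ gives the same answer.

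The computation itself is routine; the one point meriting care is the support constraint. Since $r_1,r_2,r_3$ are by construction the ordered first, second, and third nearest distances, the density must be stated on the simplex $\{x\le y\le z\}$ with no extra combinatorial factor for unordered labellings, and in the infinitesimal derivation one must keep the three annuli pairwise disjoint so that independence of the PPP on disjoint regions applies cleanly. I do not expect any substantive obstacle beyond this bookkeeping.
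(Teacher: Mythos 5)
Your proof is correct, but it takes a genuinely different route from the paper's. The paper conditions on $r_3$, invokes the fact that, given the third nearest BS is at distance $r_3$, the two nearer BSs are i.i.d.\ with radial density $2r/r_3^2$ (uniform on the disk), forms the order statistics to get $f_{r_1,r_2}(x,y\mid r_3)=8xy/r_3^4$, and then multiplies by the known marginal density of the third-nearest-neighbor distance cited from the literature. You instead map the PPP to a one-dimensional Poisson process via $t=\pi r^2$, read off the joint density $\lambda^3 e^{-\lambda t_3}$ of the first three arrival times, and change variables back (or, equivalently, compute the infinitesimal probability of one BS in each of three disjoint annuli together with the void probability $e^{-\pi\lambda z^2}$ of the disk $B(0,z)$). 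Your argument is more self-contained, since it does not rely on an externally cited formula for the marginal of $r_3$, and the 1D reduction makes the simplex support and the absence of combinatorial factors transparent; the paper's conditioning argument has the advantage of directly exposing the conditional structure $f^{(bk)}_{r_1}(x\mid r_2)=2x/r_2^2$ that it reuses later in Corollary~1 and in the Laplace transform of $I_1$. Both computations land on the same density $(2\pi\lambda)^3 xyz\,e^{-\pi\lambda z^2}$ on $0\le x\le y\le z$, and your attention to the disjointness of the annuli and the ordered support is exactly the right bookkeeping.
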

\begin{proof}
\textit{By conditioning on $r_3$, the joint conditional distribution of $r_1$ and $r_2$ is the order statistics of two $iid$ random variables with PDF $\frac{2r}{r_{3}^{2}}$, where $0 \leq r \leq r_3$. The joint conditional distribution is given by $f_{r_1,r_2}(x,y \vert r_3) = \frac{8xy}{r_{3}^{4}}$, where $0<x<y<r_3$. By following Bayes' theorem, the joint PDF $f^{(bk)}_{r_1,r_2,r_3}(.,.,.)$ is obtained by multiplying the conditional joint PDF of $r_1$ and $r_2$ by the marginal PDF of $r_3$. The lemma follows by performing this marginalization over $r_3$, using its marginal distribution derived in eq. (2) in~\cite{dist}.}
\end{proof}
The marginal distance distributions for the blackout case with and without BS cooperation are characterized by the following corollary:
\begin{col} \label{col_dist}
The marginal PDF of the distance between a test user and its serving BS in the best connected case is given by:
\vspace{-0.4cm}
\begin{align} \label{ser1}
 f^{(c)}_{r_{1}}( r ) &= 2\lambda \pi r e^{-\lambda\pi r^2}; \quad 0 \leq r \leq \infty
\end{align}

The marginal PDF of the distance between the test user and its serving BS in the blackout case without BS cooperation is expressed as:
\begin{align}
\vspace{-0.1cm}
\hspace{0.3cm}
f^{(bk)}_{r_{2}}(y) =2(\lambda\pi)^2y^3e^{-\lambda\pi y^2};\quad 0 \leq y \leq\infty
\label{ser_dist}
\end{align}
where $y$ represents the distance between the test user and second nearest BS, which is the serving BS in blackout case in non-CoMP mode.
The joint PDF of the distances between the test user and its serving/cooperating BSs in the blackout case with BS cooperation is given by:
\begin{align}
\hspace{0.3cm}
f^{(bk)}_{r_2,r_3}(y,z)&=& 4(\pi\lambda)^3 y^{3}z e^{-\pi\lambda z^{2}};\quad 0 \leq y \leq z \leq \infty
\label{joint_m}
\end{align}

The conditional PDF of $r_1$ with respect to (w.r.t.) $r_2$, for the blackout case, can be expressed as:
\begin{align}
\hspace{0.3cm}
f^{(bk)}_{r_{1}}( x|r_2 )=\frac{2x}{r_{2}^{2}};\quad 0 \leq x \leq r_2 \leq \infty
\label{cond}
\end{align}
\end{col}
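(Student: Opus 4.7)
The plan is to treat the corollary as a bundle of marginalizations and a Bayes' conditioning applied to Lemma~\ref{lem:joint}, plus one already-known nearest-neighbor fact. I would state up front that equation~(\ref{ser1}) is simply the classical contact distance distribution in a 2D PPP (as in eq.~(2) of \cite{dist}), so no derivation is needed beyond citing it. The remaining three expressions all descend from $f^{(bk)}_{r_1,r_2,r_3}(x,y,z) = (2\pi\lambda)^{3} x y z e^{-\pi \lambda z^2}$ on $0\le x\le y\le z$.

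First I would obtain (\ref{joint_m}) by integrating out $x$ from $0$ to $y$ in the triple joint, using $\int_0^y x\,dx = y^2/2$; this collapses the three factors of $2\pi\lambda$ and an $x$ into $4(\pi\lambda)^3 y^{3} z e^{-\pi\lambda z^2}$, matching the stated form. Next I would obtain the marginal (\ref{ser_dist}) by integrating (\ref{joint_m}) over $z\in[y,\infty)$, where the key integral is $\int_y^\infty z e^{-\pi\lambda z^2}\,dz = \frac{1}{2\pi\lambda}e^{-\pi\lambda y^2}$; this cancels one $\pi\lambda$ and the factor $2$, leaving $2(\pi\lambda)^2 y^3 e^{-\pi\lambda y^2}$ as required.

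Finally, for the conditional (\ref{cond}) I would apply Bayes' theorem, $f^{(bk)}_{r_1}(x\mid r_2=y) = f^{(bk)}_{r_1,r_2}(x,y)/f^{(bk)}_{r_2}(y)$. The joint $f^{(bk)}_{r_1,r_2}(x,y)$ is obtained by marginalizing the triple joint over $z\in[y,\infty)$ using the same exponential integral as above, yielding $4\pi^2\lambda^2 x y e^{-\pi\lambda y^2}$; dividing by (\ref{ser_dist}) then gives $2x/y^2$ on $0\le x\le y$, independent of $\lambda$, which is the stated result and also matches the intuition that conditioned on $r_2=y$ the inner point $r_1$ is uniformly scattered in a disk of radius $y$.

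I do not expect any real obstacle here: the derivation is a chain of two one-dimensional integrals of the form $\int x\,dx$ and $\int z e^{-\pi\lambda z^2}\,dz$ plus a division. The only point where care is needed is to keep the ordering constraints $0\le x\le y\le z$ consistent when choosing the limits of integration at each step; a small bookkeeping mistake on the support is the likeliest source of error, so I would explicitly write the support next to every intermediate PDF. I would present these as three short bullets in a single \texttt{proof} block, invoking Lemma~\ref{lem:joint} and the nearest-neighbor result from \cite{dist} but otherwise making the derivation self-contained.
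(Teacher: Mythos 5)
Your proposal is correct and follows essentially the same route as the paper: every expression is obtained by marginalizing or conditioning the triple joint PDF of Lemma~\ref{lem:joint}, with the correct ordering constraints and the same two elementary integrals. The only cosmetic differences are that you cite the classical contact-distance result for \eqref{ser1} where the paper re-derives it by integrating out $y$ and $z$, and that you form the conditional \eqref{cond} as $f_{r_1,r_2}/f_{r_2}$ while the paper divides the triple joint by $f_{r_2,r_3}$ — both give $2x/r_2^2$ since the conditional is independent of $r_3$.
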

\begin{proof}
\textit{The marginal PDF of $r_1$ in \eqref{ser1} is obtained by integrating \eqref{eq:joint} w.r.t. $y$ and $z$ where $y$ and $z$ are bounded as $x \leq y \leq \infty$ and $y \leq z \leq \infty$, respectively. The marginal PDF of $r_2$ in \eqref{ser_dist} is obtained by integrating \eqref{eq:joint} w.r.t. $x$ and $z$, where $x$ and $z$ are bounded as $0 \leq x \leq y$ and $y \leq z \leq \infty$, respectively. Similarly, the joint PDF of $r_2$ and $r_3$ is obtained by integrating \eqref{eq:joint} w.r.t. $x$ from 0 to $y$. The conditional PDF of $r_1$ in \eqref{cond} is derived by dividing the joint PDF in \eqref{eq:joint} by the joint marginal distribution obtained in \eqref{joint_m}.}
\end{proof}
\section{Coverage Probability}
The coverage probability is defined as the probability that the received SINR exceeds a specified threshold $T$. For the best connected case, the coverage probability can be expressed as:
\begin{eqnarray*}
\hspace{0.4cm} \mathcal{C}_c &=& \mathbb{P} \left\{ \frac{P h_1 \parallel r_{1}\parallel^{-\eta}}{ \sum_{i\epsilon \Phi \backslash b_1}{}P h_{i} \parallel r_{i}\parallel^{-\eta} + \sigma^2} > T \right\}
\end{eqnarray*}
where $b_1$ denotes the serving BS. Therefore, its power is excluded from the aggregate interference term in the denominator. The coverage probability for the blackout case with and without BS cooperation is given by:
\begin{align*}
\hspace{0.4cm} \mathcal{C}_{bk}^{(1)}=\mathbb{P}\left\{\frac{P h_2\parallel r_{2}\parallel^{-\eta}}{Ph_{1} \parallel r_{1}\parallel^{-\eta}+I_{r_1}+\sigma^2} > T  \right\}
\end{align*}
\begin{align*}
\mathcal{C}_{bk}^{(2)}=\mathbb{P} \left\{\frac{|\sqrt{P_{2}}h_{2}\parallel r_{2}\parallel^{-\eta/2}+\sqrt{P_{3}}h_{3}\parallel r_{3}\parallel^{-\eta/2}|^2}{P_{1}h_{1} \parallel r_{1}\parallel^{-\eta}+I_{r_2}+\sigma^2}>T\right\}
\end{align*}
where $C_{bk}^{(1)}$ and $C_{bk}^{(2)}$ are the coverage probabilities in the blackout case without and with BS cooperation, respectively. Also, $I_{r_1}$ and $I_{r_2}$ denote the aggregate inference powers in blackout without and with BS cooperation, respectively. Thus, we define $I_{r_1}$ and $I_{r_2}$ as follows:
\vspace{-0.2cm}
\begin{align*}
I_{r_1}= \sum_{i\epsilon\Phi\backslash b_{1},b_{2}}^{} P_{i}h_{i}\parallel r_{i}\parallel^{-\eta},\hspace{0.4cm} I_{r_2}= \sum_{i\epsilon\Phi\backslash b_{1},b_{2},b_{3}}^{} P_{i}h_{i}\parallel r_{i}\parallel^{-\eta}
\end{align*}
From \cite{icc} and \cite{comp4}, given that $h_{i}$s are $iid$ $\mathcal{CN}$(0,1) such that $|x_{2}h_2+x_{3}h_3|^2\sim \exp(\frac{1}{x_{2}^{2}+x_{3}^{2}})$, we can write the conditional coverage probability for the best connected user as:
\begin{eqnarray} \label{cov1}
\mathcal{C}_c(r_1) = e^{- \frac{ T \sigma^2 r_1^\eta}{P}}\mathscr{L}_{I_r}\left(\frac{T r_1^\eta}{P}\right)
\end{eqnarray}
Similarly, we can write the conditional coverage probability (conditioning on $r_2$) for a blackout user without BS cooperation as:
\begin{eqnarray}  \label{cov2}
\mathcal{C}_{bk}^{(1)}(r_2)= e^{- \frac{ T \sigma^2 r_2^\eta}{P}} \mathscr{L}_{I_1}\left(\frac{T r_2^\eta}{P}\right) \mathscr{L}_{I_{r_1}}\left(\frac{T r_2^\eta}{P}\right)
\end{eqnarray}
The conditional coverage probability (conditioned on $r_2$ and $r_3$) for a blackout user with cooperative service is given by:
\begin{align}
\hspace{-0.2cm}C_{bk}^{(2)}(r_2,r_3)\hspace{-0.1cm}=\exp\bigg(\hspace{-0.05cm}\frac{-T\sigma^2}{x_{2}^{2}\hspace{-0.1cm}+\hspace{-0.1cm}x_{3}^{2}}\hspace{-0.05cm}\bigg)\mathscr{L}_{I_1}\hspace{-0.1cm}\bigg(\hspace{-0.05cm}\frac{T}{x_{2}^{2}\hspace{-0.1cm}+\hspace{-0.1cm}x_{3}^{2}}\bigg)\mathscr{L}_{I_{r_2}}\bigg(\hspace{-0.05cm}\frac{T}{x_{2}^{2}\hspace{-0.1cm}+\hspace{-0.1cm}x_{3}^{2}}\hspace{-0.05cm}\bigg)
\label{concv}
\end{align}
where
\vspace{-0.1cm}
\begin{eqnarray*}
x_{i}=\sqrt{P_{i}}\parallel r_{i}\parallel^{-\eta/2}, \quad I_{1}= P_{1}h_{1} \parallel r_{1}\parallel^{-\eta}
\end{eqnarray*}
The Laplace transforms (LTs) of $I_1$ and $I_{r_1}$ for the best connected and blackout cases without BS cooperation are characterized in \cite{icc}. Therefore, we focus on the characterization of the LTs of $I_1$ and $I_{r_2}$ in the blackout case with BS cooperation. Here, it is worth mentioning that these LTs are different from \cite{comp3} due to different cooperating BSs. We consider the cooperation between second and the third nearest BSs. The LTs of $I_1$ and $I_{r_2}$ in blackout with BS cooperation are expressed in following lemma.
\begin{lem}
The Laplace transform of $I_{1}$ in the blackout case with BS cooperation is given by
\begin{eqnarray}
\mathscr{L}_{I_1}(s)&=&\int_{0}^{r_2}\frac{2r_{1}}{r_{2}^{2}(1+sPr_{1}^{-\eta})}d_{r_1}
\end{eqnarray}
The Laplace transform of $I_{r_2}$ can be expressed in terms of a hypergeometric function as:
\small
\begin{align}
\hspace{-0.3cm}\mathscr{L}_{I_{r_2}}\hspace{-0.1cm}(s)\hspace{-0.1cm}&=&\hspace{-0.35cm}\exp\bigg(\hspace{-0.1cm}\frac{-2\pi\lambda sPr_{3}^{2-\eta}}{\eta-2}\mathstrut_2 F_1\big(1,1\hspace{-0.1cm}-\hspace{-0.1cm}\frac{2}{n};2\hspace{-0.1cm}-\hspace{-0.1cm}\frac{2}{n};-sP r_{3}^{-\eta}\big)\hspace{-0.1cm}\bigg)
\end{align}
\normalsize
where
\begin{equation*}
s=\frac{T}{P(r_{2}^{-\eta}+r_{3}^{-\eta})}
\end{equation*}
\label{Lts}
\end{lem}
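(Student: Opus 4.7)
The plan is to handle the two Laplace transforms separately. Both derivations condition on the appropriate nearest-neighbor distance and then use the Rayleigh-fading identity $\mathbb{E}[e^{-sPh\,r^{-\eta}}\mid r]=1/(1+sPr^{-\eta})$, but the first is conditioned on $r_2$ (since $b_1$ lies inside the disk of radius $r_2$) while the second is conditioned on $r_3$ (since the remaining PPP lies outside the disk of radius $r_3$).

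For $\mathscr{L}_{I_1}(s)$, I would condition on $r_1$, apply the Rayleigh-fading identity to $I_1=Ph_1 r_1^{-\eta}$, and then de-condition using the conditional density $f^{(bk)}_{r_1}(x\mid r_2)=2x/r_2^2$ on $[0,r_2]$ from Corollary~\ref{col_dist}. Integrating the product over $r_1\in[0,r_2]$ reproduces the stated expression verbatim.

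For $\mathscr{L}_{I_{r_2}}(s)$, I would condition on $r_3$, invoke Slivnyak's theorem to treat the remaining points as a PPP of intensity $\lambda$ on the complement of the disk of radius $r_3$, and average the $iid$ exponential fades first. Standard PGFL manipulation then yields
\begin{equation*}
\mathscr{L}_{I_{r_2}}(s)=\exp\!\left(-2\pi\lambda\int_{r_3}^{\infty}\frac{sP\,r^{1-\eta}}{1+sP\,r^{-\eta}}\,dr\right).
\end{equation*}
The one non-routine step is expressing this integral via a Gauss hypergeometric function. My approach is the substitution $u=(r/r_3)^{-\eta}$, which maps $[r_3,\infty)$ onto $(0,1]$ and reduces the integrand, up to the prefactor $sP\,r_3^{2-\eta}/\eta$, to $u^{-2/\eta}(1+sP\,r_3^{-\eta}u)^{-1}$. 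This is exactly the Euler integral representation of ${}_2F_1(1,1-2/\eta;2-2/\eta;-sP\,r_3^{-\eta})$, and the normalization $\Gamma(2-2/\eta)/[\Gamma(1-2/\eta)\Gamma(1)]=1-2/\eta$ produces the $1/(\eta-2)$ coefficient in the stated exponent.

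The only place to be careful is the gamma-function bookkeeping in this last conversion; everything else is routine PPP/Rayleigh-fading machinery. One should also keep in mind that when the lemma is applied to \eqref{concv}, the argument $s$ takes the specific value $T/[P(r_2^{-\eta}+r_3^{-\eta})]$, inherited from the non-coherent CoMP SINR via $|x_2 h_2+x_3 h_3|^2\sim\exp(1/(x_2^2+x_3^2))$.
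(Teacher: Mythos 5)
Your proposal is correct and follows essentially the same route as the paper, which simply defers to "the same procedure" in the cited HO-skipping reference with the integration lower limit moved to $r_3$; you have just written out that procedure explicitly (conditioning, the Rayleigh-fading identity, the PGFL of the PPP outside $B(0,r_3)$, and the Euler-integral reduction to ${}_2F_1$ with the correct $\Gamma(2-2/\eta)=(1-2/\eta)\Gamma(1-2/\eta)$ bookkeeping yielding the $1/(\eta-2)$ factor). The only cosmetic note is that the "$2/n$" in the paper's hypergeometric arguments is a typo for "$2/\eta$", which your derivation implicitly corrects.
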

\begin{proof}
\textit{$\mathscr{L}_{I_1}(s)$ and $\mathscr{L}_{I_{r_2}}(s)$ are obtained by following the same procedure in \cite{icc} while considering $s=\frac{T}{P(r_{2}^{-\eta}+r_{3}^{-\eta})}$. Also, we perform the integration from $r_3$ to $\infty$ while calculating the Laplace transform for $I_{r_2}$.}
\end{proof}
We evaluate the above LTs for a special case at $\eta=4$, which is the most practical value in outdoor environment. The LTs at $\eta=4$ are given by the following corollary.
\begin{col}
The Laplace transforms of $I_{1}$ and $I_{r_{2}}$ at $\eta=4$ for the blackout case with BS cooperation are boiled down into much simpler expressions as shown below:
\begin{eqnarray}
\mathscr{L}_{I_1}\left(s\right)\big|_{\eta=4}=1-\sqrt{\frac{Tr_{3}^{4}}{r_{2}^{4}+r_{3}^{4}}}\arctan\bigg(\sqrt{\frac{r_{2}^{4}+r_{3}^{4}}{Tr_{3}^{4}}}\bigg)
\end{eqnarray}
\vspace{-0.1cm}
\begin{align}
\hspace{-0.37cm}\mathscr{L}_{I_{r_2}}\hspace{-0.1cm}\left(s\right)\big|_{\eta=4}\hspace{-0.05cm}=\exp\hspace{-0.05cm}\bigg(\hspace{-0.15cm}-\hspace{-0.1cm}\pi\lambda\sqrt{\hspace{-0.1cm}\frac{T}{r_{2}^{-4}\hspace{-0.1cm}+\hspace{-0.1cm}r_{3}^{-4}}}\arctan\hspace{-0.1cm}\bigg(\hspace{-0.05cm}\sqrt{\frac{Tr_{2}^{4}}{r_{2}^{4}\hspace{-0.1cm}+\hspace{-0.1cm}r_{3}^{4}}}\bigg)\hspace{-0.05cm}\bigg)
\end{align}
\end{col}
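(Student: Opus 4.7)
The plan is to obtain both identities by direct specialization of the two integral representations in Lemma~\ref{Lts} at $\eta=4$ and then substitute the particular value $s=\frac{T}{P(r_2^{-4}+r_3^{-4})}$ at the very end. Throughout, I would keep $s$ as a free parameter for as long as possible, since the dependence on $r_2,r_3$ only enters through $sP$.

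For $\mathscr{L}_{I_1}(s)$, the approach is a change of variables. Setting $\eta=4$ in the integral from Lemma~\ref{Lts} gives
\begin{equation*}
\mathscr{L}_{I_1}(s) \;=\; \frac{2}{r_2^{2}}\int_{0}^{r_2}\frac{r_1^{5}}{r_1^{4}+sP}\,dr_1,
\end{equation*}
and I would substitute $u=r_1^{2}$ (so $du = 2r_1\,dr_1$) to reduce the integrand to $\frac{u^{2}}{u^{2}+sP}$. Writing $u^{2}=(u^{2}+sP)-sP$ splits this into an elementary piece plus the standard antiderivative $\int\frac{du}{u^{2}+sP}=\frac{1}{\sqrt{sP}}\arctan\bigl(u/\sqrt{sP}\bigr)$. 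Evaluating between $0$ and $r_2^{2}$ yields $\mathscr{L}_{I_1}(s)=1-\frac{\sqrt{sP}}{r_2^{2}}\arctan\!\bigl(\tfrac{r_2^{2}}{\sqrt{sP}}\bigr)$. Finally I would substitute $sP=\frac{T r_2^{4}r_3^{4}}{r_2^{4}+r_3^{4}}$; the factor $\sqrt{sP}/r_2^{2}$ collapses to $\sqrt{T r_3^{4}/(r_2^{4}+r_3^{4})}$, matching the stated expression.

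For $\mathscr{L}_{I_{r_2}}(s)$, the key tool is the classical closed form of the Gauss hypergeometric function,
\begin{equation*}
\mathstrut_2F_1\!\Bigl(1,\tfrac{1}{2};\tfrac{3}{2};-z\Bigr)\;=\;\frac{\arctan\sqrt{z}}{\sqrt{z}}.
\end{equation*}
Plugging $\eta=4$ into the exponent from Lemma~\ref{Lts} gives a prefactor $\frac{-2\pi\lambda\, sP r_3^{-2}}{2}=-\pi\lambda sP r_3^{-2}$ multiplied by $\mathstrut_2F_1(1,\tfrac{1}{2};\tfrac{3}{2};-sPr_3^{-4})$. Applying the identity above with $z=sPr_3^{-4}$ cancels one factor of $sP/r_3^{2}$ and leaves exactly $-\pi\lambda\sqrt{sP}\,\arctan\!\bigl(\sqrt{sP}/r_3^{2}\bigr)$ inside the exponential. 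Substituting $\sqrt{sP}=\sqrt{T/(r_2^{-4}+r_3^{-4})}$ and simplifying $\sqrt{sP}/r_3^{2}=\sqrt{T r_2^{4}/(r_2^{4}+r_3^{4})}$ produces the stated formula.

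Neither step is conceptually hard; the only real obstacle is book-keeping the algebra that relates the three equivalent forms of $\sqrt{sP}$ (in terms of $T$, $r_2$, $r_3$) and matching them to the particular square-root arrangements chosen in the statement. I would therefore keep $sP$ symbolic until the last substitution and verify both formulas by checking the limiting behavior as $T\to 0$ (both LTs should tend to $1$) and as $T\to\infty$ (the $\arctan$ tends to $\pi/2$, giving the expected decay rates), which provides a quick sanity check on the simplifications.
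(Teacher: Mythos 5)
Your derivation is correct and is exactly the route the paper intends (the paper states this corollary without proof, as it is a direct specialization of Lemma~\ref{Lts} at $\eta=4$): the change of variables $u=r_1^2$ with the split $u^2=(u^2+sP)-sP$ for $\mathscr{L}_{I_1}$, and the identity $\mathstrut_2F_1(1,\tfrac12;\tfrac32;-z)=\arctan(\sqrt{z})/\sqrt{z}$ for $\mathscr{L}_{I_{r_2}}$, followed by substituting $sP=\tfrac{T}{r_2^{-4}+r_3^{-4}}$, reproduce both stated formulas. One trivial remark on your sanity check: as $T\to\infty$ the decay of $\mathscr{L}_{I_1}$ comes from the small-argument expansion $\sqrt{a}\,\arctan(1/\sqrt{a})\approx 1-\tfrac{1}{3a}$ (the arctan argument shrinks there), while the $\arctan\to\pi/2$ behavior applies only to $\mathscr{L}_{I_{r_2}}$; this does not affect the proof itself.
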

The coverage probabilities for the best connected and blackout cases with and without BS cooperation are illustrated in the following theorem.
\begin{theorem}
\label{theorem:Coverage Probability}
\begin{figure*}
\begin{eqnarray}
\mathcal{C}_c = 2 \pi \lambda x\int_0^\infty\exp\left(- \frac{ T \sigma^2 x^\eta}{P} - \pi\lambda x^{2} \left(1+ T^{2/\eta}\int_{T^{-2/\eta}}^{\infty}\frac{1}{1+w^{\eta/2}}dw\right) \right) dx
\label{f_cov1}
\\ \notag
\end{eqnarray}
\begin{eqnarray}
\mathcal{C}_{bk}^{(1)} = (2\lambda\pi)^{2}\int_{0}^{\infty}y \exp\bigg({-Ty^{\eta}\sigma^{2}-\lambda\pi y^2 \bigg(\frac{2 T}{\eta-2}\mathstrut_2 F_1\big(1,1-\frac{2}{\eta};2-\frac{2}{\eta};-T\big)+1\bigg)}\bigg)\int_{0}^{y}\frac{x}{1+Ty^{\eta}x^{-\eta}}dxdy
\label{f_cov2}
\\ \notag
\end{eqnarray}
\begin{eqnarray}
\mathcal C_{bk}^{(2)}=8(\pi\lambda)^3\int_{0}^{\infty}\hspace{-0.1cm}\int_{r_2}^{\infty}r_{2}r_{3} \exp\bigg(\hspace{-0.1cm}-\hspace{-0.1cm}s\sigma^2\hspace{-0.1cm}-\hspace{-0.1cm}\pi\lambda r_{3}^{2}\bigg(1+\frac{2sPr_{3}^{-\eta}}{\eta-2}\mathstrut_2 F_1\big(1,1\hspace{-0.1cm}-\hspace{-0.1cm}\frac{2}{n};2\hspace{-0.1cm}-\hspace{-0.1cm}\frac{2}{n};-sP r_{3}^{-\eta}\big)\hspace{-0.1cm}\bigg)\bigg)\int_{0}^{r_2}\frac{r_{1}}{1+sPr_{1}^{-\eta}}dr_{1} dr_{3}dr_{2}
\label{f_covbk}
\end{eqnarray}
\vspace{-0.5cm}
\hrulefill
\end{figure*}
\begin{figure}[!t]
\centering
\includegraphics[width=0.935 \linewidth]{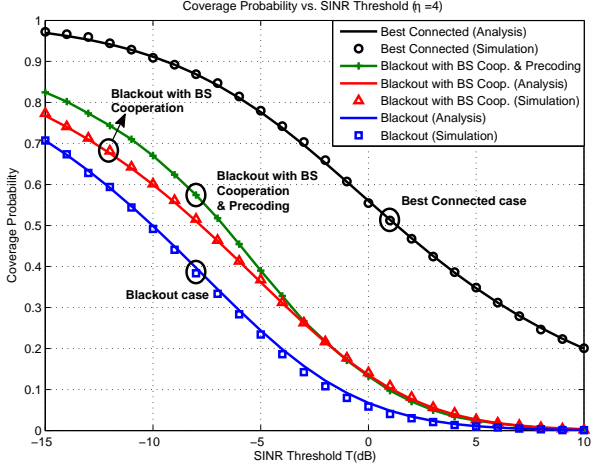}
\small \caption{Coverage probability plots for best connected and HO skipping cases evaluated at $\eta=4$.}
\label{cp}
\end{figure}
Considering a PPP single tier downlink cellular network with BS intensity $\lambda$ in a Rayleigh fading environment, the coverage probability for the best connected, blackout and blackout with BS cooperation users can be expressed by \eqref{f_cov1}, \eqref{f_cov2} and \eqref{f_covbk}, respectively. In an interference limited environment with a special case at $\eta=4$, \eqref{f_covbk} reduces to
\small
\begin{align*}
\hspace{-0.2cm}C_{bk}^{(2)}\hspace{-0.1cm}=\hspace{-0.2cm}\int_{0}^{\infty}\hspace{-0.1cm}\int_{r_2}^{\infty}\hspace{-0.2cm}4(\pi\lambda)^3r_{2}^{3}r_{3}\bigg(\hspace{-0.1cm}1-\hspace{-0.1cm}\sqrt{\frac{Tr_{3}^{4}}{r_{2}^{4}+r_{3}^{4}}}\arctan\bigg(\hspace{-0.1cm}\sqrt{\frac{r_{2}^{4}+r_{3}^{4}}{Tr_{3}^{4}}}\bigg)\bigg)\notag\cdot\\
\end{align*}
\vspace{-1cm}
\begin{align}
\hspace{-0.4cm}\exp\bigg(\hspace{-0.2cm}-\hspace{-0.1cm}\pi\lambda\bigg(\hspace{-0.1cm}r_{3}^{2}\hspace{-0.1cm}+\hspace{-0.1cm}\sqrt{\frac{T}{\hspace{-0.1cm}r_{2}^{-4}\hspace{-0.1cm}+\hspace{-0.1cm}r_{3}^{-4}}}\arctan\bigg(\hspace{-0.1cm}\sqrt{\frac{Tr_{2}^{4}}{r_{2}^{4}+r_{3}^{4}}}\bigg)\bigg)\bigg)dr_{3}dr_{2}
\end{align}
\normalsize
\end{theorem}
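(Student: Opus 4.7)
All three coverage expressions in Theorem~\ref{theorem:Coverage Probability} follow the same template: unconditioning the conditional coverage formulas \eqref{cov1}--\eqref{concv} against the matching distance densities from Corollary~\ref{col_dist}, substituting the Laplace transforms of each interference component, and finally specializing to $\eta=4$. I would handle the three cases in parallel, since the cooperative-blackout derivation contains the other two as structurally simpler subcases.

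\textbf{Best-connected case.} Starting from \eqref{cov1}, I would multiply by $f_{r_1}^{(c)}$ in \eqref{ser1} and integrate $r_1$ over $[0,\infty)$. The only outstanding ingredient is $\mathscr{L}_{I_r}(Tr_1^\eta/P)$, the Laplace transform of the aggregate interference from all BSs outside the disk of radius $r_1$. Applying the probability generating functional of the PPP reduces this to the exponential of $-2\pi\lambda\int_{r_1}^\infty r/(1+r^\eta/(sP))\,dr$ with $s=Tr_1^\eta/P$; a rescaling $r=r_1 u$ followed by $w=u^2/T^{2/\eta}$ recasts the exponent as $-\pi\lambda r_1^2 T^{2/\eta}\int_{T^{-2/\eta}}^\infty(1+w^{\eta/2})^{-1}\,dw$, yielding \eqref{f_cov1}.

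\textbf{Blackout cases.} For the non-cooperative variant \eqref{f_cov2}, I would substitute \eqref{cov2} into the expectation against \eqref{ser_dist}: $\mathscr{L}_{I_{r_1}}$ is the same PPP exclusion-zone transform used above (and is cited from \cite{icc}), producing the hypergeometric exponent in \eqref{f_cov2}, while $\mathscr{L}_{I_1}$ is obtained by averaging the single-interferer transform $(1+sPr_1^{-\eta})^{-1}$ against the conditional density \eqref{cond}, producing the inner $x$-integral. For the cooperative variant \eqref{f_covbk}, I would plug \eqref{concv} into the expectation against the joint density \eqref{joint_m} with $s=T/[P(r_2^{-\eta}+r_3^{-\eta})]$; Lemma~\ref{Lts} supplies both required Laplace transforms directly. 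The joint-density prefactor $4(\pi\lambda)^3 r_2^3 r_3$, combined with the $2/r_2^2$ emerging from the conditional-density averaging inside $\mathscr{L}_{I_1}$, collapses to the $8(\pi\lambda)^3 r_2 r_3$ displayed in \eqref{f_covbk}; the $e^{-\pi\lambda r_3^2}$ factor from the joint density then merges with the exponent of $\mathscr{L}_{I_{r_2}}$ to form the combined hypergeometric exponential.

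\textbf{Special case $\eta=4$ and main obstacle.} Dropping $\sigma^2$ (interference-limited regime) and substituting the closed-form Laplace transforms for $\eta=4$ into \eqref{f_covbk} gives the stated double integral. No single step is deep in isolation; the real effort is bookkeeping --- tracking which argument $s$ is used in which conditional stage, and making sure the prefactors coming from the joint density and from the conditional-density averaging over $r_1$ combine consistently with the $8(\pi\lambda)^3$ coefficient and with the $\arctan$ factors arising from the $\eta=4$ closed forms.
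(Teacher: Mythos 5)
Your proposal is correct and follows exactly the route of the paper's own (one-line) proof: substitute the Laplace transforms of $I_1$, $I_{r_1}$, and $I_{r_2}$ (from \cite{icc} and Lemma~\ref{Lts}) into the conditional coverage probabilities \eqref{cov1}--\eqref{concv} and integrate against the densities of Corollary~\ref{col_dist}, then specialize to $\eta=4$ with $\sigma^2=0$. Your bookkeeping checks out --- in particular the prefactor collapse $4(\pi\lambda)^3 r_2^3 r_3\cdot(2/r_2^2)=8(\pi\lambda)^3 r_2 r_3$ and the merging of $e^{-\pi\lambda r_3^2}$ with the hypergeometric exponent are exactly what produce \eqref{f_covbk}.
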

\begin{proof}
\textit{We prove the theorem by substituting the LTs of $I_{1}$, $I_{r_1}$ and $I_{r_2}$ from \cite{icc} and Lemma 2 in the conditional coverage probabilities obtained in \eqref{cov1}, \eqref{cov2} and \eqref{concv} and integrating it over the distance distributions obtained in Corollary 1.}
\end{proof}
Fig.~\ref{cp} shows the coverage probabilities of the best connected vs. blackout cases with and without BS cooperation. In the blackout case, the user is connected to the second nearest BS while receiving huge interference from the nearest BS. In such a case, some interference cancellation technique can be employed to further enhance the SINR, which will lead to an improved coverage probability. In the literature, there is a technique in which a particular interfering signal can be serially detected, demodulated, decoded and removed from the aggregate interference. Thus, by following \cite{16a}, we evaluate our model with nearest BS interference cancellation (IC).
\begin{theorem}
In a PPP downlink cellular network with BS cooperation and IC capabilities, coverage probability in the blackout case is given by:
\vspace{-0.18cm}
\begin{eqnarray*}
C_{bk,IC}^{(2)}=\int_{0}^{\infty}\int_{r_2}^{\infty}4(\pi\lambda)^3 r_{2}^{3} r_{3}\exp(-s\sigma^{2}-\pi\lambda r_{3}^{2})\notag\cdot
\end{eqnarray*}
\vspace{-0.35cm}
\begin{align}
\hspace{-0.3cm}
\exp\bigg(\hspace{-0.1cm}\frac{-2\pi\lambda sPr_{3}^{2-\eta}}{\eta-2}\mathstrut_2 F_1\big(1,1\hspace{-0.1cm}-\hspace{-0.1cm}\frac{2}{n};2\hspace{-0.1cm}-\hspace{-0.1cm}\frac{2}{n};-sP r_{3}^{-\eta}\big)\hspace{-0.1cm}\bigg)dr_{3}dr_{2}
\end{align}
\end{theorem}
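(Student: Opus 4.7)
The plan is to follow the derivation of Theorem~\ref{theorem:Coverage Probability} for $\mathcal{C}_{bk}^{(2)}$ step by step, then remove exactly one factor to reflect perfect IC of the dominant interferer. Starting from the conditional coverage probability in \eqref{concv}, the assumption that the nearest BS signal $I_1 = P_1 h_1 \|r_1\|^{-\eta}$ is cancelled deterministically removes it from the SINR denominator. By the same independence argument that produced the factorisation of Laplace transforms in \eqref{concv}, the factor $\mathscr{L}_{I_1}\!\left(T/(x_2^2+x_3^2)\right)$ disappears, leaving the conditional coverage
\begin{equation*}
C_{bk,IC}^{(2)}(r_2,r_3) = \exp\!\left(\frac{-T\sigma^2}{x_2^2+x_3^2}\right)\mathscr{L}_{I_{r_2}}\!\left(\frac{T}{x_2^2+x_3^2}\right).
\end{equation*}

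Next I would substitute the closed form for $\mathscr{L}_{I_{r_2}}(s)$ from Lemma~\ref{Lts}, noting that $s = T/(P(r_2^{-\eta}+r_3^{-\eta}))$ coincides with $T/(x_2^2+x_3^2)$ under the common-power assumption. This produces the noise factor $e^{-s\sigma^2}$ together with the hypergeometric exponential appearing on the second line of the stated expression. Because the conditional coverage no longer depends on $r_1$, I then de-condition by integrating against the marginal joint density $f^{(bk)}_{r_2,r_3}(y,z) = 4(\pi\lambda)^3 y^3 z e^{-\pi\lambda z^2}$ obtained in \eqref{joint_m}, rather than against the full triple density used for \eqref{f_covbk}. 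The $e^{-\pi\lambda z^2}$ from the density combines with $e^{-s\sigma^2}$ to give the composite exponential $\exp(-s\sigma^2 - \pi\lambda r_3^2)$, and the $4(\pi\lambda)^3 r_2^3 r_3$ prefactor in the theorem is read off directly from the density.

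The calculation is essentially mechanical once Lemmas~\ref{lem:joint} and~\ref{Lts} are invoked; the step I would most expect to trip up is the bookkeeping of which joint density to integrate against. Using $f^{(bk)}_{r_1,r_2,r_3}$ and the residual $r_1$ integral from \eqref{f_covbk} by inertia would leave a spurious $\int_0^{r_2} r_1\,dr_1$ factor and the wrong prefactor $8(\pi\lambda)^3 r_2 r_3$. Dropping that inner integral after IC and integrating instead against $f^{(bk)}_{r_2,r_3}$ is what turns the triple integral of \eqref{f_covbk} into the double integral stated in the theorem, and is really the only nontrivial observation in the proof.
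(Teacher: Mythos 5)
Your proposal is correct and follows essentially the same route as the paper, whose proof is simply ``follow the procedure of Theorem 1 and eliminate the LT of $I_1$.'' Your additional observation---that dropping $\mathscr{L}_{I_1}$ means de-conditioning against the marginal density $f^{(bk)}_{r_2,r_3}$ of \eqref{joint_m} rather than the full triple density, which is exactly what produces the $4(\pi\lambda)^3 r_2^3 r_3$ prefactor and collapses the triple integral of \eqref{f_covbk} to a double integral---is the right bookkeeping and fills in the detail the paper leaves implicit.
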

\begin{proof}
\textit{We derive this relation by following the same procedure as for Theorem 1 and eliminating the LT of $I_1$.}
\end{proof}
Fig.~\ref{cp2} shows the coverage probabilities for the best connected, blackout and blackout with BS cooperation users when nearest BS IC is enabled. It can be observed that the coverage probability for the blackout case with BS cooperation and IC follows the best connected trend at low threshold values.

In Fig.~\ref{cp} and~\ref{cp2}, we show by simulation that the CSI aware BS cooperation (i.e. transmission with precoding) offers marginal coverage probability gains when compared to the non-coherent BS cooperation. Particularly, the precoding offers $6\%$ and $8\%$ increase in coverage probability w.r.t. the non-coherent BS cooperation without and with IC, respectively. Furthermore, the gains diminish and approach zero at high SINR thresholds. It is worth mentioning that CSI aware BS cooperation coverage expressions can be derived by following~\cite[Theorem 5]{comp3} but with the joint service distance distribution obtained in \eqref{joint_m}.
\begin{figure}[!t]
\centering
\includegraphics[width=0.935 \linewidth]{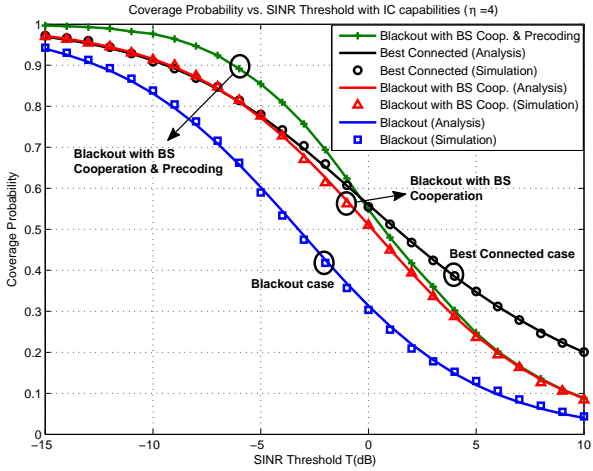}
\small \caption{Coverage probability plots for best connected and HO skipping cases with IC evaluated at $\eta=4$.}
\label{cp2}
\end{figure}
\section{Average Throughput}
In this section, we encompass the user mobility in the rate analysis and derive an expression for the average throughput. We focus on the user throughput after excluding signaling and control overheads. Therefore, we need to omit the control overhead from the overall capacity. According to 3GPP Release 11 \cite{17a}, the control overhead consumes a fraction $u_c=0.3$ of the overall network capacity. Thus, the average throughput of a typical user can be expressed as:
\vspace{-0.1cm}
 \begin{eqnarray}
\text{AT} &=& W\mathcal{R}(1-u_c)(1-D_{HO})
\end{eqnarray}
 where W is the overall bandwidth, $\mathcal{R}$ is the spectral efficiency and $D_{HO}$ is the HO cost. We define $D_{HO}$ as the time wasted in performing HO per unit time, which is a unit-less quantity that defines the percentage of time wasted in HO signalling. $D_{HO}$ is a function of the number of HOs per second (HO rate) and the time required for performing a single HO. $D_{HO}$ is given by:
\vspace{-0.1cm}
\begin{eqnarray}
D_{HO}&=& H(v) * d
\end{eqnarray}
The HO rate can be defined as the number of intersections between the user's trajectory and the cell boundaries per unit time. Therefore, it is a function of user velocity and BS intensity. Following \cite{10a}, we can define the HO rate for a typical user in a single tier network as:
\vspace{-0.1cm}
\begin{eqnarray}
H(v)&=& \frac{4v}{\pi}\sqrt{\lambda}
\end{eqnarray}
From \cite{6a}, the average spectral efficiency can be written as:
\begin{eqnarray} \label{rates}
\mathcal{R}&\stackrel{(a)}{=}& \int_{0}^{\infty}\mathbb{P}\left\{\ln(1+ {\rm SINR})>z\right\}dz\\
&\stackrel{(b)}{=}&\int_{0}^{\infty}\frac{\mathbb{P}\left\{{\rm SINR}>t\right\}}{t+1}dt
\end{eqnarray}
where (a) follows because $\ln(1+{\rm SINR})$ is a positive random variable and (b) follows by the change of variables $t=e^{z}-1$ \cite{18a}. By performing the numerical evaluation for spectral efficiency in the conventional, blackout with and without cooperation case, we get the spectral efficiency in nats/sec/Hz as shown in table \ref{tab1}.

\begin{table}[ht]
\renewcommand{\arraystretch}{1.3}
\caption{Spectral Efficiency for all cases in nats/sec/Hz}
\label{tab1}
\centering
 \begin{tabular}{||c c c||}
 \hline
 \multicolumn{3}{||c||}{\textbf{Spectral Efficiency (nats/sec/Hz)}} \\
 \hline
 Case & Non-IC & IC \\
 \hline\hline
 Best connected $\mathcal{R}_{c}$ & 1.49 & - \\
 \hline
 Blackout $\mathcal{R}_{bk}^{(1)}$ & 0.21 & 0.66\\
 \hline
 Blackout (BS coop.) $\mathcal{R}_{bk}^{(2)}$ & 0.31 & 1.01\\
 \hline
\end{tabular}
\end{table}

In the HO skipping case, the user alternates between the best connected and blackout cases. Therefore, we can assume that the user spends 50$\%$ time in the best connected mode and 50$\%$ in the blackout mode with/without BS cooperation. The average spectral efficiency for the HO skipping case is given by:
\vspace{-0.2cm}
\begin{eqnarray}
\mathcal{R}_{sk}^{(1)}= \frac{\mathcal{R}_{c}+\mathcal{R}_{bk}^{(1)}}{2}&\simeq&0.85\hspace{0.2cm}\text{nats/sec/Hz}\\
\mathcal{R}_{sk}^{(2)}= \frac{\mathcal{R}_{c}+\mathcal{R}_{bk}^{(2)}}{2}&\simeq&0.90\hspace{0.2cm}\text{nats/sec/Hz}
\end{eqnarray}
where $\mathcal{R}_{sk}^{(1)}$ and $\mathcal{R}_{sk}^{(2)}$ represent the average spectral efficiency in the HO skipping case without and with BS cooperation, respectively. Similarly, the average spectral efficiency for the IC enabled HO skipping case without and with BS cooperation are found to be 1.08 nats/sec/Hz and 1.25 nats/sec/Hz, respectively.
\section{Numerical Results}
In this section, we compare the throughput performance of our developed model with the best connected and blackout cases. The simulation parameters are shown in table \ref{tab2}. We conduct our analysis based on the nearest BS interference cancellation and consider various values for the HO delay and BS intensity.
\begin{table}[ht]
\renewcommand{\arraystretch}{1.3}
\caption{Simulation parameters}
\label{tab2}
 \begin{tabular}{||l|l||}
 \hline
 \multicolumn{2}{||c||}{\textbf{Simulation parameters}} \\
 \hline
\hline  Tx Power: \quad 1 watt &Path loss exponent $\eta$:\quad 4\\
\hline  Overall Bandwidth $W$ : \quad 10 MHz &HO delay $d$:\quad 0.7, 2 s\\
\hline  Control Overhead $u_{c_{conv}}$: \quad 0.3 & Control Overhead $u_{c_{bk}}$: \quad 0.15\\
 \hline
          \end{tabular}
        \end{table}
 \begin{figure}[]
\centering
\includegraphics[width=0.935 \linewidth]{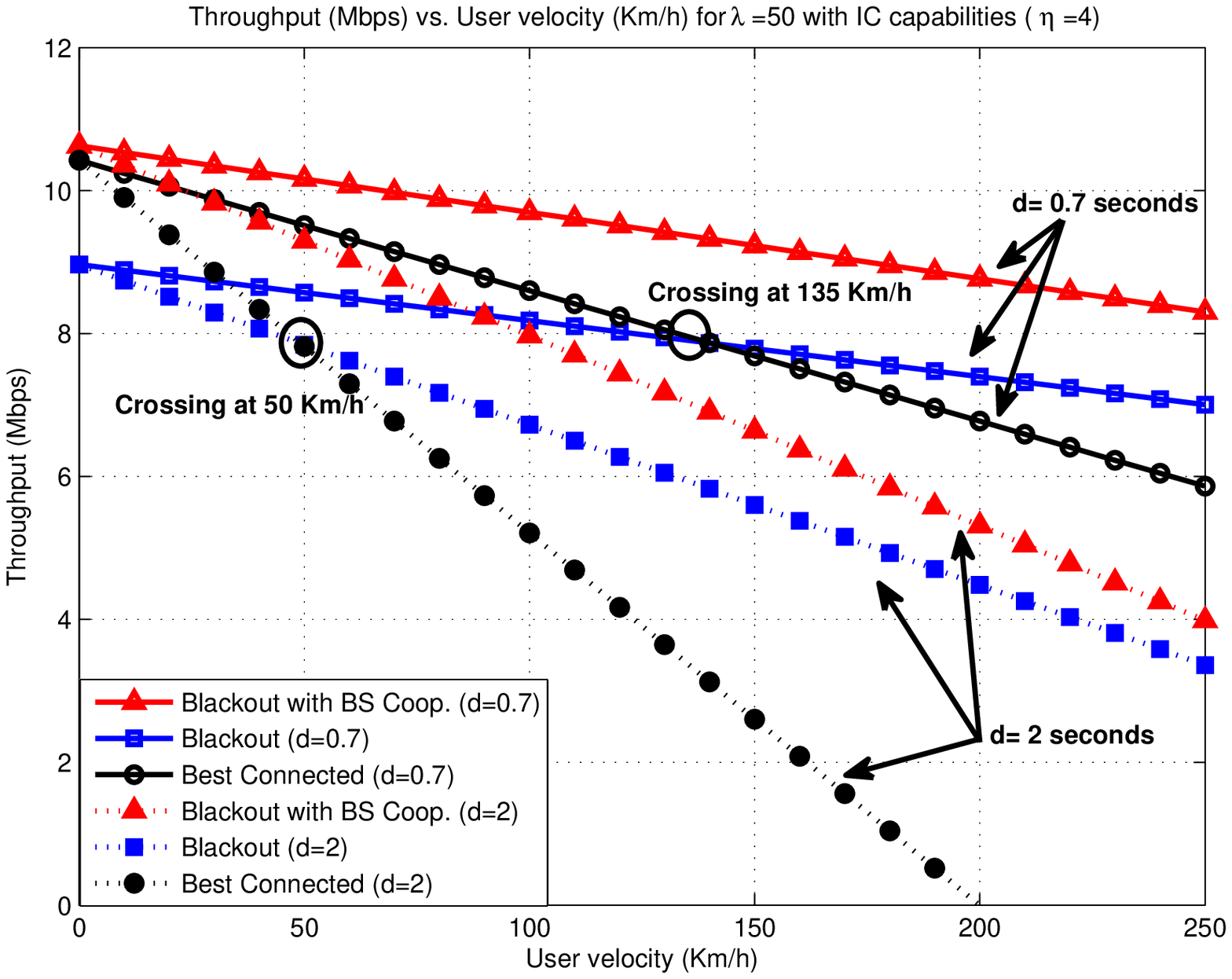}
\small \caption{Average Throughput (Mbps) vs. user velocity (Km/h) for $\lambda=50 BS/Km^2$}
\label{T1}
\end{figure}
\begin{figure}[!h]
\centering
\includegraphics[width=0.935 \linewidth]{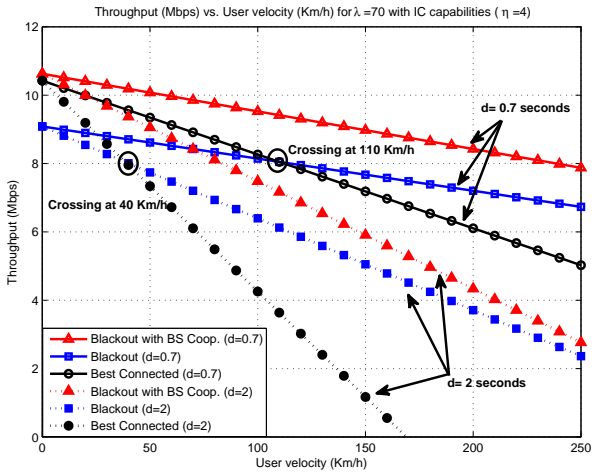}
\small \caption{Average Throughput (Mbps) vs. user velocity (Km/h) for $\lambda=70 BS/Km^2$}
\label{T2}
\end{figure}
Fig.~\ref{T1} and~\ref{T2} show performance gains in the average throughput of a blackout over conventional user. It is observed that the blackout user with BS cooperation shows considerable gains in the average throughput as compared to the conventional and blackout users without BS cooperation. For instance, the cooperative blackout user moving at the speed of 100 Km/h, in a cellular network with the BS intensity 70 $BS/Km^{2}$ and single HO delay $d=0.7 s$, will experience performance gains of 15$\%$ and 17$\%$ as compared to the conventional and blackout cases, respectively.
\section{Conclusion}
This paper proposes a cooperative HO skipping scheme for a single tier cellular network to enhance the average throughput for mobile users. We develop an analytical paradigm to model the performance of the proposed cooperative HO skipping scheme in order to study the effect of HO delay on the average throughput. The developed mathematical model is based on stochastic geometry and is validated via Monte Carlo simulations. The results manifest the negative impact of HO on the users' throughput in dense cellular networks and emphasize the potential of the proposed HO scheme to mitigate such negative HO impact. Particularly, the results show up to $56\%$ more rate gains, which can be harvested via the proposed cooperative HO scheme when compared to the conventional HO scheme that always maintains the best BS association.
For future work, we will extend our study towards a multi-tier network with user velocity aware HO skipping. Thus, we will propose multiple types of HO skipping procedures based on different user mobility profiles.
\bibliographystyle{IEEEtran}
\bibliography{Rabe.bib}
\vfill

\end{document}